\documentclass{article}

\usepackage{amsmath,amssymb,amsthm,mismath}
\usepackage{url}
\newtheorem{theorem}{Theorem}
\newtheorem{lemma}{Lemma}

\newcommand{\ket}[1]{\mathinner{\left \lvert #1 \right\rangle}}

\title{Elementary proofs of quantum lower bounds for element distinctness and $k$-sum}
\author{Andris Ambainis, Martins Kokainis \\ \\
Centre for Quantum Computer Science, \\Faculty of Science and Technology, 
University of Latvia}

\date{}

\begin{document}

\maketitle

\begin{abstract}
We give elementary proofs of the $\Omega(n^{2/3})$ quantum query lower bound for element distinctness and the $\Omega(n^{k/(k+1)})$ lower bound for the $k$-sum problem, for fixed $k$ and sufficiently large groups. The proofs use only basic properties of quantum states and linear algebra, analyzing the differences of algorithm states on different inputs.
\end{abstract}

\section{Introduction}

Element distinctness is a well-known example of a quantum speedup. It is solvable with $O(n^{2/3})$ queries quantumly~\cite{Ambainis2007} but requires $\Omega(n)$ queries classically. The quantum algorithm played a central role in establishing quantum walks as a general paradigm for algorithm design~\cite{Szegedy2004,MNRS2011,Santha2008}.

The quantum query bound is optimal. The matching $\Omega(n^{2/3})$ lower bound was first established using the polynomial method, as a consequence of Aaronson and Shi's $\Omega(n^{1/3})$ lower bound for the collision problem~\cite{AaronsonShi2004}. Subsequent proofs used the negative-weight adversary method~\cite{Belovs2012} and the compressed oracle method~\cite{Zhandry2019,Childs2025}. These proofs involve either  substantial mathematical machinery, such as results from polynomial approximation theory, or a technically involved analysis of quantum queries.

In this note, we give an elementary proof of the $\Omega(n^{2/3})$ quantum lower bound for element distinctness. The proof is slightly more than two pages and requires only basic quantum computing and linear algebra. Our argument refines the original hybrid method of Bennett, Bernstein, Brassard, and Vazirani~\cite{BBBV1997}, which established the $\Omega(\sqrt n)$ quantum lower bound for unstructured search. The additional ingredient is a second-order difference of algorithm states under changes to two input positions.

We find it surprising that this well-studied lower bound has such a short and transparent proof. This raises a broader question: which other quantum lower bounds have substantially simpler proofs than the currently known ones?

We also extend our argument to the $k$-sum problem, which asks whether an input contains $k$ entries whose sum equals a prescribed target. 
This is interesting as the hardest query problem with certificate size $k$:
any problem with positive certificates of size $k$ for a fixed $k$ is solvable with $O(n^{k/(k+1)})$ queries \cite{ChildsEisenberg2005} and, for the $k$-sum problem, Belovs and Špalek \cite{BelovsSpalek2013} showed that $\Omega(n^{k/(k+1)})$ queries are necessary.
Their proof uses the negative-weight adversary method, to go beyond the $O(\sqrt{kn})$ certificate-complexity limit of the nonnegative adversary method. In this paper, we give an elementary proof of the $\Omega(n^{k/(k+1)})$ lower bound using higher-order differences of quantum states.

{\bf AI disclosure.} This proof was found while working with ChatGPT 6 Astra on a quantum lower bound for element distinctness in the hybrid query model of \cite{AmbainisGilyenKokainis2026}. A simplification of the proposed proof in the hybrid model led to the proof in the standard model presented in this note.

\section{Element distinctness lower bound} 
\label{sec:dist}

\begin{theorem}
\label{thm:1}
Let $\Sigma$ be an alphabet with at least $n+1$ elements.
Any bounded-error quantum algorithm for element distinctness on $n$ entries over $\Sigma$ uses $\Omega(n^{2/3})$ queries.
\end{theorem}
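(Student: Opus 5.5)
We will argue by a refined hybrid method with second-order state differences. Fix a distinct input $x=(x_1,\dots,x_n)$. Since $|\Sigma|\ge n+1$, there is a symbol $a\in\Sigma$ not occurring in $x$. For $i\in[n]$ let $x^{(i)}$ be $x$ with position $i$ set to $a$, and for $i\ne j$ let $x^{(ij)}$ be $x$ with both positions $i,j$ set to $a$. Then $x$ and every $x^{(i)}$ are distinct (yes-instances for ``all distinct''), while every $x^{(ij)}$ contains a collision. Let $\ket{\psi^t_y}$ be the state of a $T$-query algorithm after $t$ queries on input $y$. Define the second-order difference
\[
\ket{\delta^t_{ij}}=\ket{\psi^t_{x^{(ij)}}}-\ket{\psi^t_{x^{(i)}}}-\ket{\psi^t_{x^{(j)}}}+\ket{\psi^t_{x}} .
\]
Bounded error forces $\|\ket{\psi^T_{x^{(ij)}}}-\ket{\psi^T_{x^{(i)}}}\|=\Omega(1)$ (and the same with $j$). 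I will aim for $\sum_{i\ne j}\|\ket{\delta^T_{ij}}\|^2=\Omega(n^2)$. A direct triangle inequality does not give this, because $\ket{\psi_{x^{(j)}}}-\ket{\psi_x}$ could also be large. The fix is to use the first-order differences $\ket{\epsilon^t_i}=\ket{\psi^t_{x^{(i)}}}-\ket{\psi^t_x}$. The standard BBBV hybrid bound gives $\sum_i\|\ket{\epsilon^T_i}\|\le 2\sum_t\sum_i\|P_i\psi^t_x\|\le 2T\sqrt n$, so for most $j$ we have $\|\epsilon^T_j\|=O(T/\sqrt n)$. If $T=o(\sqrt n)$ we are already done. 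Otherwise, we restrict to a constant fraction of the pairs and handle that term accordingly. More cleanly, we average over a random distinct $x$ and a random $a$, and use the fact that $\ket{\delta^T_{ij}}$ must be $\Omega(1)$ whenever the algorithm distinguishes $x^{(ij)}$ from $x^{(i)}$ more than it distinguishes $x^{(j)}$ from $x$. Since $x$ and $x^{(j)}$ are both accepted, their final acceptance probabilities agree up to the error. A standard bound then gives $\|\delta^T_{ij}\|\ge\Omega(1)$ for every pair.

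The core is a recursion for $\ket{\delta^t_{ij}}$. Write the query oracle as $O_y$, with unitaries $U_t$ in between. On a basis state querying position $k$, $O_{x^{(i)}}-O_x$ is supported on $k=i$, and $O_{x^{(ij)}}-O_{x^{(i)}}-O_{x^{(j)}}+O_x=0$, since the changes at $i$ and $j$ act on orthogonal query registers. Expanding, we get
\[
\ket{\delta^{t+1}_{ij}}=U_{t+1}\bigl(O_x\ket{\delta^t_{ij}}+(O_{x^{(i)}}-O_x)\ket{\epsilon^t_j}+(O_{x^{(j)}}-O_x)\ket{\epsilon^t_i}\bigr),
\]
where $\epsilon$ is taken with the appropriate base point ($x^{(i)}$ versus $x$). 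Hence $\|\delta^T_{ij}\|\le 2\sum_{t<T}\bigl(\|P_i\epsilon^t_j\|+\|P_j\epsilon^t_i\|\bigr)$, where $P_i$ projects onto querying index $i$.

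Summing over pairs, we apply Cauchy--Schwarz to get $\sum_{i,j}\|\delta^T_{ij}\|\le 4\sum_t\sum_{i,j}\|P_i\epsilon^t_j\|\le 4\sum_t n\bigl(\sum_j\|\epsilon^t_j\|^2\bigr)^{1/2}$, using $\sum_i\|P_i v\|^2\le\|v\|^2$. The first-order hybrid bound, in its $\ell_2$ form, gives $\sum_j\|\epsilon^t_j\|^2\le\bigl(\sum_j\|\epsilon^t_j\|\bigr)^2$, which is too lossy. Instead we use $\|\epsilon^t_j\|\le 2\sum_{s<t}\|P_j\psi^s_x\|$, so $\sum_j\|\epsilon^t_j\|^2\le 4t\sum_{s<t}\sum_j\|P_j\psi^s\|^2\le 4t^2$. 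This yields $\sum_{i,j}\|\delta^T_{ij}\|\le 4n\sum_t 2t=O(nT^2)$. Comparing with the lower bound $\Omega(n^2)$ from the first step only gives $T=\Omega(\sqrt n)$. This is the main obstacle: the sum over $i$ must also be charged to the query mass on $i$. The bound should be $\sum_{i,j}\|P_i\epsilon^t_j\|\le\sqrt n\,(\sum_{i,j}\|P_i\epsilon^t_j\|^2)^{1/2}\cdot\sqrt n$ with $\sum_{i,j}\|P_i\epsilon_j\|^2\le\sum_j\|\epsilon_j\|^2\le 4t^2$, giving $\sum_t n\cdot 2t=O(nT^2)$ again.

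To get $n^{2/3}$, I will instead control $\|\delta\|$ in $\ell_2$. We sum $\|\delta^T_{ij}\|^2$, for which $\Omega(n^2)$ is needed, and bound it using $\|\delta^T_{ij}\|^2\le 4T\sum_t(\|P_i\epsilon^t_j\|+\|P_j\epsilon^t_i\|)^2$. This gives $\sum_{ij}\|\delta^T_{ij}\|^2\le 16T\sum_t\sum_j\|\epsilon^t_j\|^2\le O(T\cdot T\cdot T^2)=O(T^4)$, but that loses the $n$-dependence in the wrong direction. The correct balance should come from splitting the base mass: $\|P_i\epsilon^t_j\|\le\min(\|\epsilon^t_j\|,\|P_i\psi^t_{x^{(j)}}\|+\|P_i\psi^t_x\|)$. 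Using both bounds (geometric mean or threshold) with $\sum_i\|P_i\psi\|^2=1$ and $\sum_j\|\epsilon_j\|^2\le 4t^2$ should yield $\sum_{i,j}\|\delta^T_{ij}\|^2=O(T^3/n^{0}\cdot\ldots)$, targeting $O(T^3)$. Then $T^3=\Omega(n^2)$, i.e.\ $T=\Omega(n^{2/3})$. Getting this exact $T^3$ accounting right is the step I expect to be hardest.
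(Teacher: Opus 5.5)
Your setup matches the paper's: the second-order difference $\delta_{ij}$, the identity $O_{x^{(ij)}}-O_{x^{(i)}}-O_{x^{(j)}}+O_x=0$, and cross terms controlled by $\|P_i\epsilon^t_j\|$. There is, however, a genuine gap. It lies in the first-order estimate, not in the second-order accounting you suspect.

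\textbf{The missing first-order lemma.} The BBBV bound $\sum_j\|\epsilon^t_j\|^2\le 4t^2$ is too weak by exactly a factor of $t$. Your $\min$-splitting cannot recover this factor: summed over $i$, its second alternative is only bounded by $4\ge\|\epsilon^t_j\|^2$, so you are back at $\sum_j\|\epsilon^t_j\|^2$. What you need is an average-case bound on the \emph{squared} first-order difference, $\mathbb{E}_{a,x,i}\|\epsilon^t_i\|^2\le 4t/n$, i.e.\ $\mathbb{E}\sum_i\|\epsilon^t_i\|^2\le 4t$.

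It comes from recursing on $\|\epsilon^t_i\|^2$. The two oracles agree off index $i$, so that part of $\epsilon^t_i$ is mapped isometrically. The part on index $i$ is bounded afresh, without reference to $\epsilon^t_i$. This gives $\|\epsilon^{t+1}_i\|^2\le\|\epsilon^t_i\|^2+2\|P_i\psi^t_{x^{(i)}}\|^2+2\|P_i\psi^t_x\|^2$.

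Each step is thus charged to query mass on \emph{both} endpoints. For the $x^{(i)}$ term to also average to $1/n$, you need $x^{(i)}$ to be distributed like $x$. That is why the paper draws $(a,x_1,\dots,x_n)$ as a uniformly random sequence of distinct symbols: swapping $a$ and $x_i$ preserves the distribution. Fixing $x$, or charging only to the query mass of $\psi^s_x$ as you do, cannot give this.

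With the lemma, your own first computation closes: $\mathbb{E}\sum_{i\ne j}\|\delta^T_{ij}\|\le 4n\sum_{t<T}\bigl(\mathbb{E}\sum_j\|\epsilon^t_j\|^2\bigr)^{1/2}=O(nT^{3/2})$. Your $\ell_2$ version also yields the $O(T^3)$ you were aiming for, since $16T\sum_{t<T}4t\le 32T^3$.

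\textbf{The lower-bound side needs the same lemma.} Bounded error does not give $\|\delta^T_{ij}\|=\Omega(1)$ for every pair. That $x$ and $x^{(j)}$ are both accepted constrains acceptance probabilities, not states, so $\epsilon^T_j$ may be large and cancel $\psi^T_{x^{(ij)}}-\psi^T_{x^{(i)}}$. What you do have is $\|\psi^T_{x^{(ij)}}-\psi^T_x\|\ge 1/6$, hence $\|\delta^T_{ij}\|\ge 1/6-\|\epsilon^T_i\|-\|\epsilon^T_j\|$. This is useful only if the first-order terms are small on average. BBBV makes them $O(T/\sqrt n)$, which is useless once $T\gtrsim\sqrt n$, i.e.\ in the whole range you need to exclude. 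The lemma instead gives $\mathbb{E}\|\epsilon^T_i\|\le 2\sqrt{T/n}$.

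\textbf{Regroup the recursion.} Expand around $O_{x^{(ij)}}\ket{\delta^t_{ij}}$ rather than $O_x\ket{\delta^t_{ij}}$. With $D_i=O_{x^{(i)}}-O_x$, the cross terms become $D_j\ket{\epsilon^t_i}+D_i\ket{\epsilon^t_j}$, and both first-order differences are based at the distinct input $x$. Your grouping instead produces $\psi^t_{x^{(ij)}}-\psi^t_{x^{(i)}}$, one of whose endpoints, $x^{(ij)}$, contains a collision. On that input no averaging symmetry holds, since its query mass may concentrate on the colliding positions. So the squared-norm lemma is not available for that term.
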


The proof can be extended to cover the case $|\Sigma|=n$ (which is the smallest alphabet size for which the problem is nontrivial) but becomes more complicated. 
For simplicity, we present a version which does not cover this case.

\begin{proof}
We identify $\Sigma$ with $\{0,\ldots,m-1\}$, where $m=|\Sigma|$, and use the standard value-query oracle
\begin{equation}
O_z\lvert i,b,w\rangle
=\lvert i,(b+z_i) \bmod m,w\rangle,
\label{eq:oracle}
\end{equation}
where $i\in[n]$ is the query index, $b\in\{0,\ldots,m-1\}$ is the answer register, and $w$ denotes the remaining workspace.

By deferring measurements and padding shorter executions, we can assume that the algorithm makes exactly $T$ queries, with input-independent unitaries between queries.
We write the state after $t$ queries on input $z$ as
\begin{equation}
\label{eq:0}
\ket{\psi_{t}(z)}=U_{t}O_z\ket{\psi_{t-1}(z)},
\end{equation}
where $\ket{\psi_0}$ is input-independent, $O_z$ denotes the query transformation corresponding to the input $z$ and $U_t$ are the input-independent unitaries. Let $P_i$ be the projection to the subspace spanned by basis states with query index $i$.

We choose $n+1$ distinct symbols
$a,x_1,\ldots,x_n$ uniformly at random, and define an input $x=(x_1,\ldots,x_n)$. 
For each $i$, let $x^{(i)}$ be the input obtained from $x$ by changing the value of the $i^{\rm th}$ variable to $a$. 
Crucially, $x^{(i)}$ has the same distribution as $x$, because the joint distribution of $(a,x_1,\ldots,x_n)$ is invariant under swapping $a$ and $x_i$.

We first bound the effect of replacing a uniformly random input entry on the algorithm's state.
Let
\[ \ket{\delta_{i,t}}=\ket{\psi_t(x^{(i)})}-\ket{\psi_t(x)}.
\]
\begin{lemma}
    \label{lem:1}
For $i$ chosen uniformly from $[n]$, independently of $a,x_1,\ldots,x_n$,
\[
\mathbb E_{a,x,i}\|\delta_{i,t}\|^2\le \frac{4t}{n}.
\]
\end{lemma}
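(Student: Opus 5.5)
This is the standard hybrid argument of BBBV. Write $O_{x^{(i)}}=O_x+(O_{x^{(i)}}-O_x)$. The two oracles differ only on basis states with query index $i$, so $O_{x^{(i)}}-O_x=(O_{x^{(i)}}-O_x)P_i$. Its operator norm is at most $2$, since it is a difference of two unitaries.

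The first step is a telescoping recursion for $\delta_{i,t}$. Using \eqref{eq:0},
\[
\ket{\delta_{i,t}}=U_tO_{x^{(i)}}\ket{\delta_{i,t-1}}+U_t(O_{x^{(i)}}-O_x)P_i\ket{\psi_{t-1}(x)}.
\]
Unitarity of $U_tO_{x^{(i)}}$ then gives
\[
\|\delta_{i,t}\|\le\|\delta_{i,t-1}\|+2\|P_i\psi_{t-1}(x)\|,
\]
and unrolling from $\delta_{i,0}=0$ yields
\[
\|\delta_{i,t}\|\le 2\sum_{s=0}^{t-1}\|P_i\psi_s(x)\|.
\]

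The second step squares this and applies Cauchy--Schwarz:
\[
\|\delta_{i,t}\|^2\le 4t\sum_{s<t}\|P_i\psi_s(x)\|^2.
\]

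The third step averages over $i$ and then over $a,x$. The key point is that $\psi_s(x)$ does not depend on $i$. For every fixed $x$ we have $\sum_i\|P_i\psi_s(x)\|^2=1$, so averaging over $i$ gives $\mathbb E_i\|\delta_{i,t}\|^2\le 4t\cdot t/n$.

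That bound is $4t^2/n$, not $4t/n$, so the crude triangle inequality loses a factor of $t$. I therefore need a sharper accounting in which the $t$ queries do not add up coherently in the worst case after averaging.

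The fix I would try is to expand $\|\delta_{i,t}\|^2$ exactly instead of using the triangle inequality. Write $e_s=(O_{x^{(i)}}-O_x)P_i\ket{\psi_{s-1}(x)}$ and $W_s=U_tO_{x^{(i)}}\cdots U_{s}$. Then
\[
\delta_{i,t}=\sum_{s=1}^tW_se_s,
\]
and
\[
\|\delta_{i,t}\|^2=\sum_s\|e_s\|^2+2\,\mathrm{Re}\sum_{s<s'}\langle W_se_s,W_{s'}e_{s'}\rangle.
\]
The diagonal terms average over $i$ to at most $4t/n$, which is exactly the target. So the main obstacle is showing that the cross terms have nonpositive, or zero, expectation over the random $a$.

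Here I would use the randomness of $a$. Conditioned on $x$ and $i$, the symbol $a$ is uniform over the $m-n$ values not in $x$. Under the oracle \eqref{eq:oracle}, the operator $O_{x^{(i)}}-O_x$ on index $i$ equals $(S^{a}-S^{x_i})$ acting on the answer register, where $S$ is the cyclic shift. Averaging over $a$ is nearly a Fourier average, which kills the nonzero-frequency components.

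To make this clean I would instead use the identity
\[
\|\delta_{i,t}\|^2=2-2\,\mathrm{Re}\langle\psi_t(x),\psi_t(x^{(i)})\rangle
\]
and bound $1-\mathrm{Re}\langle\cdot\rangle$ by a per-query progress argument. At each query, the decrease in overlap is at most $\mathrm{Re}\langle P_i\psi,(I-O_x^\dagger O_{x^{(i)}})P_i\psi'\rangle$. I would try to bound its expectation over $a$ by $2\|P_i\psi_{s-1}(x)\|^2$ plus terms that cancel by the symmetry $x\leftrightarrow x^{(i)}$, which holds because $x^{(i)}$ has the same distribution as $x$. If that symmetrization works, summing over $s$ and averaging over $i$ gives the claimed $4t/n$. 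This cancellation is the step I expect to be hardest.
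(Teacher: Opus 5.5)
\textbf{There is a genuine gap.} None of your three routes reaches $4t/n$, and the one you develop most cannot work. The BBBV recursion gives $4t^2/n$, as you note.

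Your plan to make the cross terms $\mathrm{Re}\langle W_se_s,W_{s'}e_{s'}\rangle$ vanish by averaging over $a$, with $x$ and $i$ fixed, fails because the lemma is false conditioned on $x$. Fix any $x^*$ with distinct entries. Let the algorithm run Grover search for an index $j$ with $x_j\neq x^*_j$: query, flip the phase if the answer register differs from $x^*_j$, then unquery using $O_z^{-1}=NO_zN$, where $N$ negates the answer register. Each iteration costs two queries.
\begin{itemize}
\item On input $x^*$ nothing is marked, so the state never moves.
\item On $(x^*)^{(i)}$ the index $i$ is marked for every admissible $a$ (conditioned on $x=x^*$, $a\neq x^*_i$).
\end{itemize}
Hence $\|\delta_{i,2r}\|^2=4\sin^2(r\theta)$ with $\sin\theta=1/\sqrt n$, for every $i$ and every $a$. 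This is about $4r^2/n$, so $\mathbb E_{a,i}[\|\delta_{i,t}\|^2\mid x=x^*]$ exceeds $4t/n$ already at $t=6$ for large $n$. The cross terms are positive for every $a$, and no Fourier averaging over $a$ removes them. It is the averaging over $x$ itself that excludes such algorithms.

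Your last idea is the right one, but you left the decisive step open and misdiagnosed it as a cancellation; no cancellation is needed. Write $\psi=\psi_{s-1}(x)$ and $\psi'=\psi_{s-1}(x^{(i)})$. The exact one-step decrease of the real overlap is $\mathrm{Re}\langle P_i\psi,(I-O_x^\dagger O_{x^{(i)}})P_i\psi'\rangle$. This is at most $2\|P_i\psi\|\,\|P_i\psi'\|\le\|P_i\psi\|^2+\|P_i\psi'\|^2$, so
\[
\|\delta_{i,s}\|^2\le\|\delta_{i,s-1}\|^2+2\|P_i\psi_{s-1}(x)\|^2+2\|P_i\psi_{s-1}(x^{(i)})\|^2 .
\]
For each fixed $i$, $x^{(i)}$ has the same distribution as $x$, because the law of $(a,x_1,\ldots,x_n)$ is invariant under swapping $a$ and $x_i$. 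So the last term has the same expectation as the middle one, but only over $(a,x)$ jointly, not over $a$ alone. Averaging over $i$ then gives an increment of at most $4/n$ per query.

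This is exactly the paper's proof. The paper phrases it as the orthogonal split $\|\delta_{i,t+1}\|^2=\|(I-P_i)\delta_{i,t}\|^2+\|O_{x^{(i)}}P_i\psi_t(x^{(i)})-O_xP_i\psi_t(x)\|^2$, followed by $\|u-v\|^2\le2\|u\|^2+2\|v\|^2$.

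What your BBBV decomposition misses is this. It charges everything to $\psi_s(x)$ alone and carries $P_i\delta_{i,s}$ forward coherently. The working argument instead discards the old $P_i$-component and pays for it with the $P_i$-masses of both $\psi_s(x)$ and $\psi_s(x^{(i)})$. Those masses are small only on average over $x$, via the symmetry above.
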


\begin{proof}
Input-independent unitaries preserve distances, and the oracles $O_x$ and $O_{x^{(i)}}$ agree outside query index $i$. Thus
\[
\begin{aligned}
\|\delta_{i,t+1}\|^2
&=
\|(I-P_i)\delta_{i,t}\|^2
+
\bigl\|
O_{x^{(i)}}P_i\psi_t(x^{(i)})
-
O_xP_i\psi_t(x)
\bigr\|^2\\
&\le
\|\delta_{i,t}\|^2
+
2\|P_i\psi_t(x^{(i)})\|^2
+
2\|P_i\psi_t(x)\|^2,
\end{aligned}
\]
with the last step following from $\|a+b\|^2 \leq 2 \|a\|^2 + 2 \|b\|^2$.

For each fixed $i$, the inputs $x^{(i)}$ and $x$ have the same distribution. Moreover,
\[
\mathbb E_{a,x,i}\|P_i\psi_t(x)\|^2
=
\frac1n\,\mathbb E_{a,x}
\sum_{i=1}^n\|P_i\psi_t(x)\|^2
=\frac1n.
\]
Taking expectations therefore gives
\[
\mathbb E_{a,x,i}\|\delta_{i,t+1}\|^2
\le
\mathbb E_{a,x,i}\|\delta_{i,t}\|^2+\frac4n.
\]
Since $\delta_{i,0}=0$, the lemma follows.\end{proof}

Let 
$S_t=\mathbb E_{a,x,i}\|\delta_{i,t} \|$.
By concavity of the square-root function, Lemma \ref{lem:1} implies
\begin{equation}
S_t 
\le
\sqrt{\mathbb E_{a,x,i}\|\delta_{i,t}\|^2}
\le 2\sqrt{\frac tn}.
\label{eq:new}
\end{equation}

We now choose a uniformly random ordered pair of distinct indices $(u,v)$, independently of $a,x$, and plant a collision by replacing both $x_u$ and $x_v$ by $a$. Denote the resulting input by $y$, and define
\begin{equation}
\label{eq:2a}
\ket{\Delta_t}=\ket{\psi_t(y)}-\ket{\psi_t(x^{(u)})}
                 -\ket{\psi_t(x^{(v)})}+\ket{\psi_t(x)}.
\end{equation}
Then
\begin{equation}
\label{eq:2}
\ket{\psi_t(y)}-\ket{\psi_t(x)}=\ket{\delta_{u,t}}+\ket{\delta_{v,t}}+\ket{\Delta_t}.
\end{equation}
Let $R_t=\mathbb E_{a,x,u,v}\|\Delta_t\|$. Taking expectations, we get
\[ \mathbb E_{a,x,u,v} \| \psi_t(y)-\psi_t(x) \| \leq 2 S_t + R_t . \]
$S_t$ can be upper-bounded using (\ref{eq:new}). 
Therefore, it remains to bound $R_t$. Let $D_i=O_{x^{(i)}}-O_x$. These operators satisfy
\[
 O_y=O_x+D_u+D_v.
\qquad \|D_i\|\le2
\]
Expanding the terms in (\ref{eq:2a}) according to (\ref{eq:0}) and regrouping them gives 
\begin{equation}
\label{eq:3}
\ket{\Delta_{t+1}}
=
U_{t+1}\bigl(O_y \ket{\Delta_t}+D_v\ket{\delta_{u,t}}+D_u\ket{\delta_{v,t}}\bigr).
\end{equation}
By unitarity and the triangle inequality,
\begin{equation}
\|\Delta_{t+1}\|
\le
\|\Delta_t\|
+
2\|P_v\delta_{u,t}\|
+
2\|P_u\delta_{v,t}\|.
\label{eq:new6}
\end{equation}

For fixed $a,x,u$, the concavity of the square root gives
\[
\mathbb E_{v: v\neq u}\|P_v\delta_{u,t}\|
\le
\sqrt{\frac{\sum_{v\ne u}\|P_v\delta_{u,t}\|^2}{n-1}}
\le \frac{\|\delta_{u,t}\|}{\sqrt{n-1}}.
\]
Taking the expectation in (\ref{eq:new6}) and using the symmetry between $u$ and $v$, we get
\[
R_{t+1}
\le R_t+4\mathbb E\|P_v\delta_{u,t}\|
\le R_t+\frac{4S_t}{\sqrt{n-1}}
\le R_t+\frac{8\sqrt t}{\sqrt{n(n-1)}}.
\]
Since $R_0=0$,
\[
R_T\le
\frac{8}{\sqrt{n(n-1)}}\sum_{t=0}^{T-1}\sqrt t
\le \frac{8T^{3/2}}{\sqrt{n(n-1)}},
\]
implying that 
\begin{equation} 
\label{eq:7}
\mathbb E_{a,x,u,v} \| \psi_T(y)-\psi_T(x) \| \leq 4 \sqrt{\frac{T}{n}} +
\frac{8T^{3/2}}{\sqrt{n(n-1)}}.
\end{equation}
Every $x$ has distinct entries, whereas every $y$ has exactly one colliding pair. An algorithm with error at most $1/3$ must therefore have output probabilities differing by at least $1/3$ on each pair $(x, y)$. Since the difference between the probabilities that the algorithm outputs 1 on $x$ and $y$ is at most twice the Euclidean distance between the algorithm's final states, we must have
\[
\|\psi_T(y)-\psi_T(x)\|\ge\frac16.
\]
Thus the left-hand side of (\ref{eq:7}) is at least $1/6$, and
\[
\sqrt{\frac{T}{n}}+\frac{T^{3/2}}{n}=\Omega(1).
\]
This implies $T=\Omega(n^{2/3})$. 
\end{proof}

\section{The $k$-sum lower bound}
\label{sec:ksum}

We extend the proof of Theorem~\ref{thm:1} by replacing second-order differences with
$k$th-order differences. This gives the lower bound of Belovs and
\v Spalek~\cite{BelovsSpalek2013}. 

\begin{theorem}
Let $k\ge2$ be fixed, let $G$ be a finite abelian group with $|G|\ge n^k$,
and let $c\in G$. Any bounded-error quantum algorithm that decides whether
$x\in G^n$ contains $k$ distinct positions whose values sum to $c$ uses
$\Omega(n^{k/(k+1)})$ queries.
\end{theorem}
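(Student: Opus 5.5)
We follow the proof of Theorem~\ref{thm:1}, replacing the second-order difference by a $k$th-order one. Use the oracle $O_z\ket{i,b,w}=\ket{i,b+z_i,w}$ over $G$, and write $\ket{\psi_t(z)}$ as in (\ref{eq:0}).

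\textbf{Choosing the inputs.} First choose $x\in G^n$ uniformly at random, conditioned on having no $k$ distinct positions summing to $c$. Then pick $k$ distinct indices $u_1,\dots,u_k$ uniformly at random and new values $a_1,\dots,a_k$. Choose the $a_j$ so that $a_1+\dots+a_k=c$, each $a_j$ is uniform and independent of the rest of the construction, and every partial modification is still a negative instance. For a subset $A\subseteq[k]$, let $x^{(A)}$ be $x$ with $x_{u_j}$ replaced by $a_j$ for all $j\in A$. Thus $x^{([k])}$ is a positive instance. For $|A|<k$ we need two properties:
\begin{itemize}
\item[(i)] $x^{(A)}$ is (nearly) a negative instance, and
\item[(ii)] $x^{(A)}$ is (nearly) uniformly distributed among negatives, with $(x^{(A)},u_A)$ distributed like $(x,u_A)$.
\end{itemize}
Since $a_1,\dots,a_{k-1}$ are uniform and $a_k=c-\sum_{j<k}a_j$, any proper subfamily of the $a_j$ is jointly uniform. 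The hypothesis $|G|\ge n^k$ makes the expected number of accidental $k$-subsets summing to $c$ about $\binom nk/|G|=O(1/k!)$. So a conditioning or coupling argument should give (i) and (ii) up to a small total-variation loss, absorbed into the constant error. Making this precise is the first technical point; the cleanest route may be a symmetry statement like the $a\leftrightarrow x_i$ swap, applied conditionally.

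\textbf{Higher-order differences.} For $A\subseteq[k]$ define
\[
\ket{\Delta^A_t}=\sum_{B\subseteq A}(-1)^{|A|-|B|}\ket{\psi_t(x^{(B)})}.
\]
By M\"obius inversion, $\psi_t(x^{([k])})-\psi_t(x)=\sum_{\emptyset\ne A\subseteq[k]}\Delta^A_t$. Let $D_j=O_{x^{(\{j\})}}-O_x$, which is supported on $P_{u_j}$ and has norm at most $2$. Then $O_{x^{(B)}}=O_x+\sum_{j\in B}D_j$, and the $D_j$ have disjoint supports, so all products of two or more $D$'s vanish. Expanding as in (\ref{eq:3}) gives
\[
\Delta^A_{t+1}=U_{t+1}\Bigl(O_{x^{(A)}}\Delta^A_t+\sum_{j\in A}D_j\,\Delta^{A\setminus\{j\}}_t\Bigr).
\]
Hence $\|\Delta^A_{t+1}\|\le\|\Delta^A_t\|+2\sum_{j\in A}\|P_{u_j}\Delta^{A\setminus\{j\}}_t\|$.

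\textbf{Induction on $|A|$.} Let $R^{(s)}_t=\mathbb E\|\Delta^A_t\|$ for $|A|=s$; this is well defined by symmetry. We show by induction that $R^{(s)}_t=O_k\bigl(t^{(s+1)/2}/n^{s/2}\bigr)$. The base case $s=1$ is Lemma~\ref{lem:1} together with (\ref{eq:new}), applied using property (ii).

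For the inductive step, fix everything except $u_j$. The index $u_j$ is uniform over the $n-s+1$ unused positions. By the same Cauchy--Schwarz step as in Section~\ref{sec:dist},
\[
\mathbb E_{u_j}\|P_{u_j}\Delta^{A\setminus\{j\}}_t\|\le\frac{\|\Delta^{A\setminus\{j\}}_t\|}{\sqrt{n-s+1}}.
\]
This needs $\Delta^{A\setminus\{j\}}_t$ to be independent of $u_j$, and it is: $a_j$ is hidden, and for $|A|<k$ the remaining values are jointly uniform. We therefore get $R^{(s)}_{t+1}\le R^{(s)}_t+2s\,R^{(s-1)}_t/\sqrt{n-s+1}$. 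Summing over $t$ raises the exponent of $t$ by one and that of $n$ by $\tfrac12$, which closes the induction.

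\textbf{Conclusion and main obstacle.} The final states on $x$ and $x^{([k])}$ must differ by at least $1/6$ in norm. Therefore
\[
\sum_{s=1}^k\binom ks\frac{T^{(s+1)/2}}{n^{s/2}}=\Omega(1).
\]
The dominant term is $T^{(k+1)/2}/n^{k/2}$, which forces $T=\Omega(n^{k/(k+1)})$.

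The main obstacle is the distributional step in the first paragraph. The $\Delta^{A\setminus\{j\}}$ recursion needs each $x^{(B)}$ with $|B|<k$ to be a legitimate negative input drawn from essentially the base distribution, even though $a_k$ depends on the other $a_j$ only through $c$. I would first try the following: draw $x$ and all $a_j$ i.i.d. uniform over $G$ without conditioning, and let $a_k$ be determined by $c$. Then each proper subfamily is exactly i.i.d. uniform, which gives the exact symmetry. Finally, use $|G|\ge n^k$ and a union bound to show that with constant probability all $2^k-1$ proper hybrids are negative and only the full one is positive. The counting argument is then applied on this event.
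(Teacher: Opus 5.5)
\textbf{The induction on $|A|$ does not close for $k\ge3$; this is a genuine gap.} Your M\"obius decomposition and the order-$k$ recursion $\Delta^A_{t+1}=U_{t+1}\bigl(O_{x^{(A)}}\Delta^A_t+\sum_{j\in A}D_j\Delta^{A\setminus\{j\}}_t\bigr)$ are correct and match the paper. For $k=2$ your argument is exactly Section~\ref{sec:dist}.

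The problem is the first-moment recursion $R^{(s)}_{t+1}\le R^{(s)}_t+2sR^{(s-1)}_t/\sqrt{n-s+1}$. It costs a factor of order $T/\sqrt n$ per order; as you say, summing over $t$ raises the $t$-exponent by one. Starting from $R^{(1)}_t=O(\sqrt{t/n})$ it therefore yields $R^{(s)}_T=O(T^{s-1/2}/n^{s/2})$, not $O(T^{(s+1)/2}/n^{s/2})$. The two exponents agree only for $s\le2$. With the correct exponents your final inequality gives only $T=\Omega(n^{k/(2k-1)})$, e.g.\ $n^{3/5}$ instead of $n^{3/4}$ for $k=3$.

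The loss is already fatal below the top order. At $T=n^{k/(k+1)}$ your bound on $R^{(2)}_T$ is $n^{(k-2)/(2k+2)}$, which is unbounded for $k\ge3$. So the lower-order differences themselves need a better estimate, not just the top one.

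\textbf{The missing ingredient is a second-moment bound at every order below $k$.} This is Lemma~\ref{lem:1} generalized to all orders (the paper's Lemma~\ref{ksum:sensitivity}).

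\emph{Setup.} Take $x,b\in G^n$ i.i.d.\ uniform, and let $\delta_{U,t}$ be the $|U|$th-order difference for the replacements $x_i\mapsto b_i$, $i\in U$. After a query, $\delta_{U,t+1}$ splits orthogonally over the blocks $P_i$. Outside the blocks $i\in U$ all oracles agree, contributing at most $\|\delta_{U,t}\|^2$. On block $i\in U$ one gets $O_{x^{\{i\}}}P_i\delta_{U\setminus\{i\},t}(x^{\{i\}},b)-O_xP_i\delta_{U\setminus\{i\},t}(x,b)$.

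\emph{Recursion.} Since $x_i$ and $b_i$ are i.i.d.\ uniform, the two pieces have equal expected squared norm. Averaging over a uniform $j$-set $U$ gives, for $A_{j,t}=\mathbb E\|\delta_{U,t}\|^2$,
$A_{j,t+1}\le A_{j,t}+\frac{4j}{n-j+1}A_{j-1,t}$.
Hence $A_{j,t}\le 4^j\binom tj/\binom nj$ and $\mathbb E\|\delta_{U,t}\|\le 2^j(t/n)^{j/2}$. Each order now costs a factor $\sqrt{t/n}$ rather than $t/\sqrt n$.

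\emph{Top order.} The triangle-inequality recursion is used only once, at order $k$. There the constraint $\sum_{i\in S}a_i=c$ breaks the symmetry $x^{\{i\}}\sim x$. It necessarily does: a second-moment bound at order $k$ would give an impossible $\Omega(n)$ bound. Combined with your Cauchy--Schwarz step, this gives
$R_T\le\frac{2k}{\sqrt{n-k+1}}\sum_{t<T}2^{k-1}(t/n)^{(k-1)/2}=O(T^{(k+1)/2}/n^{k/2})$.
The proper-subset terms contribute only $O_k(\sqrt{T/n})$ for $T\le n$, because proper subfamilies of the $a_i$ are i.i.d.\ uniform.

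\textbf{A smaller point.} You do not need the intermediate hybrids to be negative, since the decomposition is purely algebraic. Only $x$ must be negative, which holds with probability at least $1-1/k!\ge1/2$. Your union bound over all $2^k-1$ proper hybrids would bound the failure probability only by $(2^k-1)/k!$, which exceeds $1$ for $k=2,3$. That step should simply be dropped.
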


\begin{proof}
As in Theorem~\ref{thm:1}, consider a $T$-query algorithm. 
The queries are similar to (\ref{eq:oracle}) but use addition in $G$ instead of modular addition.
The notation $\lvert \psi_t(z) \rangle$, $O_z$, $U_t$ and $P_i$
is also the same as in the proof of Theorem \ref{thm:1}.

We may assume $T\le n$, since otherwise the claimed lower bound is immediate. We choose $x,b\in G^n$ independently and
uniformly. For $V\subseteq[n]$, let $x^V$ be obtained from $x$ by replacing
$x_i$ with $b_i$ for every $i\in V$. We define
\[
 \lvert \delta_{U,t}(x,b) \rangle
 =\sum_{V\subseteq U}(-1)^{|U|-|V|}\lvert \psi_t(x^V) \rangle.
\]
In particular, $\lvert \delta_{\varnothing,t} \rangle=\lvert \psi_t(x) \rangle$.
We first establish a
higher-order version of Lemma \ref{lem:1}.
\begin{lemma} \label{ksum:sensitivity}
For $U$ chosen uniformly among the $j$-subsets of $[n]$, independently of $x,b$,
\[
\mathbb E_{x,b,U}\|\delta_{U,t}(x,b)\|^2
\le
4^j\frac{\binom tj}{\binom nj}.
\]
Here and below, $\binom tj=0$ when $j>t$.
\end{lemma}

\begin{proof}
Let $A_{j,t}=\mathbb E_{x,b,U}\|\delta_{U,t}(x,b)\|^2$. Outside query indices
in $U$, all the oracles in the alternating sum agree. At an index $i\in U$,
the alternating sum after the query is the difference between two oracle
images of $(j-1)$st-order differences. Omitting the input-independent unitary
and using $\lVert a-b \rVert^2\le2\lVert a \rVert^2+2\lVert b \rVert^2$ therefore gives
\[
\begin{split}
 \lVert \delta_{U,t+1}(x,b) \rVert^2
 \le{}&\lVert \delta_{U,t}(x,b) \rVert^2\\
 &+2\sum_{i\in U}\left(
  \lVert P_i\delta_{U\setminus\{i\},t}(x^{\{i\}},b) \rVert^2
 +\lVert P_i\delta_{U\setminus\{i\},t}(x,b) \rVert^2\right).
\end{split}
\]
Since $x_i$ and $b_i$
are independent and uniform,
the two terms in parentheses have the same expectation. Averaging over $U$ yields
\begin{equation}
\begin{aligned}
A_{j,t+1}
&\le A_{j,t}
+\frac{4}{\binom nj}
\sum_{\substack{W\subseteq[n]\\|W|=j-1}}
\sum_{i\notin W}
\mathbb E_{x,b}\|P_i\delta_{W,t}(x,b)\|^2\\
&\le A_{j,t}
+4\frac{\binom n{j-1}}{\binom nj}A_{j-1,t}\\
&=A_{j,t}+\frac{4j}{n-j+1}A_{j-1,t}.
\end{aligned}
\label{ksum:recursion}
\end{equation}
Since $A_{0,t}=1$ and $A_{j,0}=0$ for $j\ge1$, induction and Pascal’s identity prove the lemma.
\end{proof}

Let $S_{j,t}=\mathbb E_{x,b,U}\|\delta_{U,t}(x,b)\|$, with $U$ a uniform $j$-subset. By concavity of the square root, for $1\le j\le n$ and $t\le T$,
\begin{equation}
S_{j,t}
\le
2^j\sqrt{\frac{\binom tj}{\binom nj}}
\le
2^j\left(\frac tn\right)^{j/2}.
\label{eq:new8}
\end{equation}

We now plant a solution. Choose $x\in G^n$ uniformly and an independent uniform $k$-subset $S\subseteq[n]$. Independently of $x$, choose $(a_i)_{i\in S}$ uniformly subject to $\sum_{i\in S}a_i=c$. For $V\subseteq S$, let $x^V$ now denote replacement by these $a_i$, and define $|\delta_{V,t}\rangle$ using these replacements. Put
\[
y=x^S,\qquad
|\Delta_t\rangle=|\delta_{S,t}\rangle,\qquad
R_t=\mathbb E\|\Delta_t\|.
\]
Expectations involving these planted differences are over $x,S,(a_i)_{i\in S}$.

Every proper subset of the $a_i$ consists of independent uniform elements of $G$. Also, a uniform $j$-subset of $S$ is uniform among all $j$-subsets of $[n]$. Thus, inclusion–exclusion and the triangle inequality give
\begin{equation}
\begin{gathered}
|\psi_t(y)\rangle-|\psi_t(x)\rangle
=
|\Delta_t\rangle+
\sum_{\varnothing\ne V\subsetneq S}|\delta_{V,t}\rangle,\\
\mathbb E\|\psi_t(y)-\psi_t(x)\|
\le
R_t+\sum_{j=1}^{k-1}\binom kjS_{j,t}.
\end{gathered}
\label{eq:new9}
\end{equation}
It remains to bound $R_t$.

For $i\in S$, let $D_i=O_{x^{\{i\}}}-O_x$. As before,
\[
D_i=P_iD_iP_i,\qquad
\|D_i\|\le2,\qquad
O_y=O_x+\sum_{i\in S}D_i.
\]
Expanding the definition of $\Delta_{t+1}$ gives
\[
|\Delta_{t+1}\rangle
=
U_{t+1}\left(
O_y|\Delta_t\rangle+
\sum_{i\in S}D_i|\delta_{S\setminus\{i\},t}\rangle
\right).
\]
By unitarity and the triangle inequality,
\begin{equation}
R_{t+1}
\le R_t+
2\mathbb E\sum_{i\in S}\|P_i\delta_{S\setminus\{i\},t}\|.
\label{eq:new10}
\end{equation}

For a uniform $i\in S$, the set $W=S\setminus\{i\}$ is a uniform $(k-1)$-subset, and $i$ is uniform outside $W$. The replacements on $W$ are independent and uniform. For fixed $x,b,W$, concavity of the square root gives
\[
\mathbb E_{i\notin W}\|P_i\delta_{W,t}(x,b)\|
\le
\sqrt{\frac{\sum_{i\notin W}\|P_i\delta_{W,t}(x,b)\|^2}{n-k+1}}
\le
\frac{\|\delta_{W,t}(x,b)\|}{\sqrt{n-k+1}}.
\]
Averaging and applying (\ref{eq:new10}), we obtain
\[
R_{t+1}\le R_t+\frac{2kS_{k-1,t}}{\sqrt{n-k+1}}.
\]
Since $R_0=0$, (\ref{eq:new8}) implies
\begin{equation}
R_T
\le
\frac{2k}{\sqrt{n-k+1}}
\sum_{t=0}^{T-1}S_{k-1,t}
=
O_k\!\left(\frac{T^{(k+1)/2}}{n^{k/2}}\right).
\label{eq:new11}
\end{equation}
Combining (\ref{eq:new8}), (\ref{eq:new9}), and (\ref{eq:new11}), and using $T\le n$, gives
\begin{equation}
\mathbb E\|\psi_T(y)-\psi_T(x)\|
=
O_k\!\left(
\sqrt{\frac Tn}
+
\frac{T^{(k+1)/2}}{n^{k/2}}
\right).
\label{eq:new12}
\end{equation}

The input $y$ always contains a solution. Unlike in Theorem~\ref{thm:1}, $x$ need not
be a negative input, but a union bound gives
\[
 \Pr[x\text{ contains a solution}]
 \le\frac{\binom nk}{|G|}\le\frac1{k!}\le\frac12.
\]
On every pair for which $x$ is negative, the final measurement argument in Theorem \ref{thm:1} gives
\[
\|\psi_T(y)-\psi_T(x)\|\ge\frac16.
\]
Thus the left-hand side of (\ref{eq:new12}) is at least 1/12. It follows that
\[
\sqrt{\frac Tn}+\frac{T^{(k+1)/2}}{n^{k/2}}
=\Omega_k(1),
\]
and consequently $T=\Omega(n^{k/(k+1)})$.
\end{proof}

\section{Conclusion}

Adversary~\cite{Ambainis2002,HoyerLeeSpalek2007} and polynomials~\cite{BealsEtAl2001} methods have been very successful in proving tight quantum lower bounds for a wide range of problems. Our results suggest that elementary arguments based on comparisons of quantum states can also lead to sharp bounds beyond unstructured search. It is natural to ask how far such arguments can be developed and which other lower bounds admit similarly simple proofs.

The approach in this paper looks most promising for problems with small positive certificates. Two more challenging targets are $k$-distinctness and triangle finding. Belovs recently established the optimal quantum lower bound for $k$-distinctness~\cite{Belovs2026} using a substantial technical framework; an elementary proof would be valuable. For triangle finding in an $n$-vertex graph accessed through adjacency-matrix queries, a gap remains between the $O(n^{5/4})$ upper bound~\cite{LeGall2014,CaretteLauriereMagniez2017} and the $\Omega(n)$ lower bound. Here, the challenge is to prove a superlinear lower bound.

For triangle finding, a difficulty is that we are not aware of a simple probability distribution over graphs for which the triangle problem seems to require $\omega(n)$ queries. Such a probability distribution is effectively necessary for any quantum lower bound approach, including ours.

{\bf Acknowledgments.} This research was supported by the HQCC project funded under QuantERA II ERANET Cofund scheme of Horizon 2020 and by the University of Latvia project ``Latvian quantum technology excellence and competence centre''.

\bibliographystyle{plain}
\bibliography{bib}

\end{document}